\newtheorem*{thm}{Theorem}
\newtheorem*{propo}{Proposition}
\newtheorem*{assumption}{Assumption}
\newcommand{\R}{\mathbb{R}}
\newcommand{\Z}{\mathbb{Z}}
\title{Optimal Subgraph on Disturbed Network}
\author{Matthieu Guillot, El-Houssaine Aghezzaf, Nour-Eddin El Faouzi, Angelo Furno}
\date{\today}
\begin{document}

\maketitle

\section{Introduction}
\label{intro}

Public transportation networks in large urban areas are supporting a very large number of passengers everyday. Problems arise on a daily basis, causing more or less disruptions depending on the seriousness of the events. In most cases, the consequences are local (delays of public transportation passages, local traffic jams, etc...). Thus, even if the locality of the disturbance can be relatively large and cause significant delays, the control techniques make possible to restore a fluid state of the network at the end of the day. However, the current global pandemic linked to COVID-19 reminds us that in extreme cases, everyone's habits can be changed drastically \cite{Tirachini2020}. In particular, the parameters defining urban transportation networks have been completely turned upside down \cite{liu2020}. On the one hand, during lockdown, the users of the network adopt completely different habits: some no longer use it (total or partial telework) and some would rather use individual types of transportation more that than collective ones. Thus, the demand linked to the transportation network has drastically changed both qualitatively and quantitatively \cite{di2020}. On the other hand, the offer has also been modified. The demand's change and the right of withdrawal of the agents are part of the offer's change \cite{wilbur2020}. All the pre-existing forecasts (transit times and frequencies, start and end times of service, connections, etc.) have become obsolete. Obviously, all these modifications were experienced in practice during the lockdown that occurred following the progression of the pandemic COVID-19, but one can easily imagine that in the next decades or even the next years, other events will tend to disrupt if not all urban networks but at least part of them.

In this article, we consider an urban zone and two kinds of networks on it: a {\em urban network (UN)} which represent a grid of the urban areas and a {\em public transportation network (PTN)} which represent the current existing buses network of the corresponding urban areas. We assume that we know the traveling time on the transportation network: the minimal time between any pair of nodes of PTN, and all the access times between any node of the urban network and any node of the transportation network. We also assume that we know the modified demand, that we assume to be lower than usually (which is the case during the current pandemic). A solution to our problem is a subnetwork of the transportation network that guarantee that: (i) the minimal access time from any node of the urban network to the new network is not {\em too large} compared to the original transportation network; (ii) for any itinerary, the delay caused by the deletion of nodes of the transportation network is not {\em too big}; and (iii) the number of nodes of the transportation network has been reduced at least by a known factor. A solution is optimal if it induces a minimal global delay.

\section{Related Work}
\label{relatedWork}

Several studies of the impact of COVID-19 on public transportation systems have begun to emerge. For the impact on the transit frequency, Gkiotsalitis et al. give a model that provides optimal vehicle redistribution accross metro lines of Washington DC for different scenarios based on different social distances rules \cite{gkiotsalitis2021}. Dakic et al. model and develop an optimization tool to determine the optimal bus frequencies and vehicle allocation to reduce the operating cost of the network. Regarding the sanitary conditions and the contamination exposures, Jia et al. identify the `key stations' of the railway network of Beijin to avoid in order to limit the risks of contamination for the passengers \cite{jia2021}. With a strategic point of view, Wang et al. study the effective policies for reopening phase. These policies are biased on the work-for-home, the traffic and sanitary conditions, but also on the transit capacity and demand \cite{wang2021}.

Network design is a big issue too, especially in big urban areas. LeBlanc defines the network design to find the optimal frequencies of transit lines \cite{leblanc1988}. Lee et al. do the same kind of work, but with variable demand \cite{lee2005}. More recently Cipriani, Gori and Petrelli give case study of a resolution of network design problem in the city of Rome, with multimodal properties and complex road network topology \cite{cipriani2012}. 
The highlighting of an efficient subgraph has also been studied in transportation applications. Arbex and da Cunha are interested in finding an efficient subgraph and the corresponding frequency using genetic algorithms. The objective in this article is to optimize both passenger's and operator's costs \cite{arbex2015}.

\section{Definition of the Problem}
\label{sec:definitionProblem}
Let $PTN = (V,E_t)$ be an undirected graph representing the {\em Public transportation Network}, with $|V| = n_t$ nodes and $|E_t| = m_t$ edges. The nodes $V$ represent the bus stops and the edges $E_t$ the possible links between the corresponding bus stops. Let $c : E_t \mapsto \R^+$ a cost function over the edges of $PTN$, which represent the {\em traveling time}. We assume that we know the matrix $PC\in {\R^+}^{n_t \times n_t}$ of the shortest paths in $PTN$, which means that $PC[v_1][v_2]$ is the traveling time of the shortest path between $v_1$ and $v_2$ for each $v_1, v_2 \in V$. Figure \ref{fig:PTN} is an example for six bus stops.

\begin{figure}[ht]
    \centering
    \includegraphics[width=10cm]{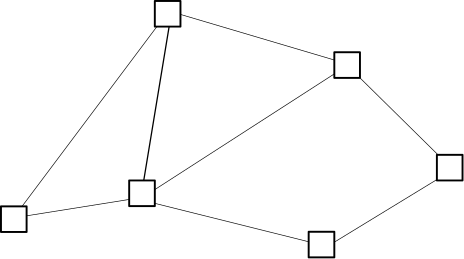}
    \caption{PTN for six bus stops}
    \label{fig:PTN}
\end{figure}

Let $UN = (U \cup V, E_u)$ be a complete bipartite graph representing the {\em Urban Network}. The set of nodes $U$ are the $n_u = |U|$ centroids of urban areas, which represent the possible origins and destinations of the demand. $E_u$ represent the possible links between the urban areas and the bus stops. As $UN$ is complete, the users are theoretically able to walk to any bus stop. Let $d : U \times V \mapsto \R^+$ be the {\em access time}: $d(u,v)$ represent the average time that the users have to walk to go from the urban area $u \in U$ to the bus stop $v \in V$. For any $u\in U$, we denote by $d_{acc}(u) = \min_{v\in V}{d(u, v)}$, and by $acc(u) = argmin_{v \in V}{d(u,v)}$. Let $OD \in {\Z^+}^{n_u \times n_u}$ the origin/destination matrix: $OD[u_1][u_2]$ is the number of users who wish to go from urban area $u_1 \in U$ to $u_2 \in U$. Figures \ref{fig:PTNetUN} and \ref{fig:PTNetUNetAccess} are the graphical representation of PTN, UN and the corresponding $d_{acc}$ and $acc$.

\begin{figure}[ht]
\begin{minipage}{0.50\linewidth}
    \includegraphics[width=8cm]{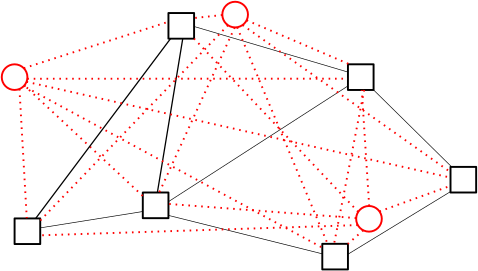}
    \caption{PTN (in black) and UN (in red)}
    \label{fig:PTNetUN}
\end{minipage}
\begin{minipage}{0.35\linewidth}
    \includegraphics[width=8cm]{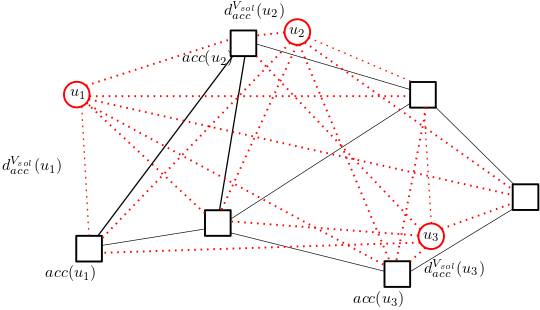}
    \caption{PTN, UN and the corresponding $d_{acc}$ and $acc$}
    \label{fig:PTNetUNetAccess}
\end{minipage}
\end{figure}

An instance of our problem is a tuple $I = (PTN, c, PC, UN, d, OD, p_{elim}, \alpha, k)$, where $PTN, c, PC, UN, d$ and $OD$ are defined as above. Moreover:

\begin{itemize}
    \item $p_{elim} \in [0,1]$ is the minimum percentage of the bus stop that has to be deleted
    \item $1 < \alpha \in \R^+$  is the admissible increase factor of the delay in the new network we want to design for each pair origin/destination
    \item $k\in {\R^+}^{n_u}$, where $k[u]$ is the admissible increase factor for the access time of $u\in U$
\end{itemize}

A solution of our problem is a subset $V_{sol}$ of $V$. For a solution $V_{sol} \subseteq V$ and an origin/destination $(u_1, u_2) \in U$, the optimal travel time from $u_1$ to $u_2$ is defined as:
\begin{equation}
Opt_{V_{sol}}(u_1, u_2) = \min_{(v_1, v_2) \in V_{sol}^2}{d(u_1, v_1) + PC[v_1][v_2] + d(u_2, v_2)}
\label{optVSol}
\end{equation}

We also define the total weighted traveling time of a solution $V_{sol}$ as $TWT_{V_{sol}} = \sum\limits_{(u_1, u_2)\in U^2}{OD[u_1][u_2]Opt_{V_{sol}}(u_1, u_2)}$

In particular, in the original network PTN, the optimal traveling time between $u_1$ and $u_2$ is 
\begin{equation}
Opt_{V}(u_1, u_2) = \min_{(v_1, v_2) \in V^2}{d(u_1, v_1) + PC[v_1][v_2] + d(u_2, v_2)}
\label{optV}
\end{equation}

and the total weighted traveling time of PTN is $TWT_{V} = \sum\limits_{(u_1, u_2)\in U^2}{OD[u_1][u_2]Opt_{V_{sol}}(u_1, u_2)}$

Finally, we define, for all $u\in U$, $d_{acc}^{V_{sol}}(u)$ as $d_{acc}^{V_{sol}}(u) = \min\limits_{v\in V_{sol}}{d(u,v)}$. In particular, $d_{acc} = d_{acc}^{V}$.

As $|V| \geq |V_{sol}|$, we have $TWT_{V} \leq TWT_{V_{sol}}$ and, as a consequence, the total weighted delay induced by the choice of solution $V_{sol}$ ($i.e.$ the deletion of all the nodes in $V\setminus V_{sol}$) is $TWT_{V} - TWT_{V_{sol}} \geq 0$. Figure \ref{fig: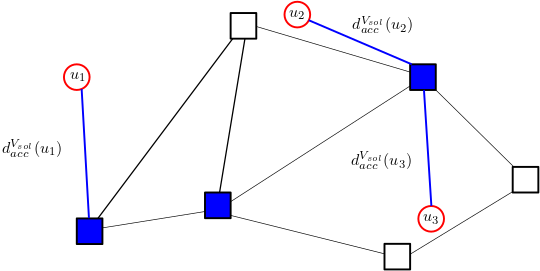} represents a solution and the corresponding $d_{acc}^{V_{sol}}$

\begin{figure}[ht]
    \centering
    \includegraphics[width=10cm]{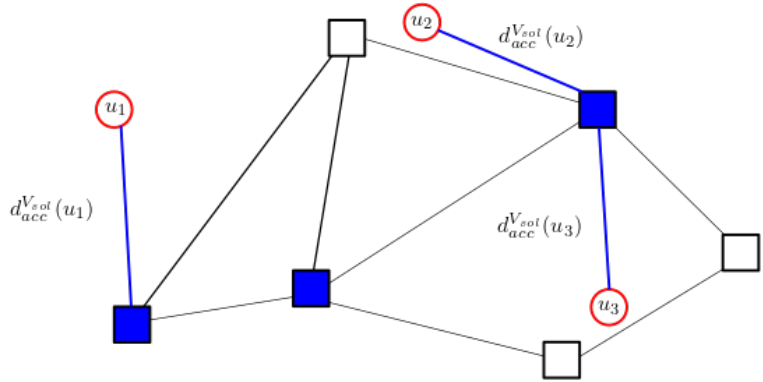}
    \caption{In blue: $V_{sol}$ and the corresponding $d_{acc}^{V_{sol}}$}
    \label{fig:PTNetVSol.png}
\end{figure}

\begin{assumption}
We assume that the delay due to the deletion of bus stops is caused by the difference of access time, and not by the difference of shortest path in the bus network. Thus, we have:

\begin{equation}
Opt_{V_{sol}}(u_1, u_2) = \min_{(v_1, v_2) \in V^2}{d(u_1, v_1) + PC[acc(u_1)][acc(u_2)] + d(u_2, v_2)}
\label{optVSolAss}
\end{equation}

Note that with this assumption, we have 

\begin{equation}
    Opt_{V_{sol}}(u_1, u_2) = d_{acc}^{V_{sol}}(u_1) + d_{acc}^{V_{sol}}(u_2) + PC[acc(u_1)][acc(u_2)]
    \label{optVsolAssumptionSimpl}
\end{equation}

\end{assumption}

This assumption is quite strong, because we omit the difference of travel time between the original network and the choice induced by $V_{sol}$. However, this simplification is realistic as most people would accept an increase of the travel time but not of the access time. Moreover, it will imply a very interesting computational benefit.

We want our solution $V_{sol}$ to have some properties 
 \begin{enumerate}[label=\Alph*)]
    \item the access time increase from $u\in U$ induced by the choice of $V_{sol}$ must not increase more than by a factor $k[u]$
    \item the delay induced by the choice of $V_{sol}$ must not increase more than by a factor $\alpha$

    \item the percentage of deletion of $V_{sol}$ has to be at least $p_{elim}$
 \end{enumerate}

More formally, we want $V_{sol}$ to satisfy:
\begin{enumerate}[label=\Alph*)]
     \item $\forall u \in U$, $d_{acc}^{V_{sol}}(u) \leq k[u]\times d_{acc}(u)$
     \item $\forall (u_1, u_2) \in U^2$, $Opt_{V_{sol}}(u_1, u_2) \leq \alpha \times Opt_{V}(u_1, u_2)$ 
    \item $|V_{sol}| \leq (1-p_{elim})|V|$
 \end{enumerate}

    A solution $V_{sol}$ to our problem is said to be {\em feasible} if it satisfies the constraints above. Let us call $\mathcal{V_F}$ the the set of all feasible solutions. As there is a finite number of bus stops, we know that $|\mathcal{V_F}|$ is finite too. Our goal is to find a solution that minimizes the total weighted traveling time. So $V^*_{sol}$ is said to be {\em optimal} if it verifies $TWT_{V^*_{sol}} = \min\limits_{V \in \mathcal{V_F}}{TWT_V}$. The optimal subgraph on disturbed network problem (OSDNP for short) is the problem which consists of finding such a solution.
    
    In order to find a formulation for our problem, as we want to minimize the total weighted traveling time under certain constraints, it is quite natural to consider in a first place mathematical programming, and especially linear programming.

\section{Mixed-Integer Linear Programming Formulation}
\label{MILP}

Let us define, for all $v\in V$ and all $k \in {\R^+}^{n_u}$, $D_u^k = \{v \in V | d(u, v) \leq k[u]d_{acc}(u)\}$. We also define $M$ as a upper bound on the $d(u,v)$. 

Let us consider the mathematical program $(P)$:

\small{
$$
\begin{array}{llllll}
\label{PL}
    min &  \sum\limits_{(u_1, u_2) \in U}{OD[u_1][u_2](d_{acc}^{x}(u_1) + d_{acc}^{x}(u_2))} && \\
    s.t. & \sum\limits_{v \in D_u^k}{x(v)} & \geq & 1  & \forall u \in U & (1)\\
    & d_{acc}^{x}(u) & = & \min\limits_{v \in D_u^k}{d(u, v) + (1-x(v))M} & \forall u \in U & (2)\\
    & d_{acc}^{x}(u_1) + d_{acc}^{x}(u_2) + PC[acc(u_1)][acc(u_2)] & \leq & \alpha(d_{acc}(u_1) + d_{acc}(u_2) + PC[acc(u_1)][acc(u_2)]) & \forall (u_1, u_2) \in U^2 & (3)\\
    & \sum\limits_{v \in V}{x(v)} & \leq & (1-p_{elim})n_t && (4)\\
    & x  & \in & {\Z^+}^{n_t} &&\\
    & d_{acc}^{x} & \in & {\R^+}^{n_u} && 
    
\end{array}
$$
}

Let us call $\mathcal{P_F}$ the set of feasible solutions of $(P)$.

\begin{propo}
$(P)$ is a mixed integer linear program.
\end{propo}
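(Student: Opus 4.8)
The plan is to check the definition of a mixed-integer linear program term by term against $(P)$: an affine objective, finitely many affine constraints, and a decision vector split into an integer block and a continuous block. The variable vector is $(x, d_{acc}^x)$ with $x \in {\Z^+}^{n_t}$ integer and $d_{acc}^x \in {\R^+}^{n_u}$ continuous, so the ``mixed'' half of the claim is immediate once the rest is shown to be linear. I would first dispatch the easy components. The objective $\sum_{(u_1,u_2)} OD[u_1][u_2](d_{acc}^x(u_1)+d_{acc}^x(u_2))$ is a fixed nonnegative-weighted sum of the continuous variables, hence affine. Constraints $(1)$ and $(4)$ are linear inequalities in $x$, because each index set $D_u^k$ and each scalar ($1$ and $(1-p_{elim})n_t$) is fixed by the instance data. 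Constraint $(3)$ is affine in $d_{acc}^x$ since $PC[acc(u_1)][acc(u_2)]$, the $d_{acc}(u_i)$, and $\alpha$ are all constants, so its right-hand side is a constant and its left-hand side is a sum of two variables.

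The only genuinely nonlinear item is constraint $(2)$, $d_{acc}^x(u) = \min_{v \in D_u^k}(d(u,v)+(1-x(v))M)$, and linearizing this $\min$ is the crux of the proof. The expressions $a_{u,v} := d(u,v)+(1-x(v))M$ are affine in $x$, so $(2)$ sets each continuous variable equal to a minimum of finitely many affine functions, which I would replace by a finite family of linear constraints. The delicate point is the \emph{direction} of the replacement: since the objective minimizes a nonnegative combination of the $d_{acc}^x(u)$ and constraint $(3)$ only bounds them from above, nothing in the model pushes $d_{acc}^x(u)$ upward, so the naive one-sided relaxation $d_{acc}^x(u)\le a_{u,v}$ is \emph{not} sufficient on its own it would let the minimization collapse every $d_{acc}^x(u)$ to $0$. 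To pin $d_{acc}^x(u)$ exactly to the minimum I would add, for each $u$, binary selectors $y_{u,v}\in\{0,1\}$ for $v\in D_u^k$, together with $\sum_{v\in D_u^k} y_{u,v}=1$, the coupling $y_{u,v}\le x(v)$ (only retained stops may be selected), and the two-sided bounds $a_{u,v}-M(1-y_{u,v})\le d_{acc}^x(u)\le a_{u,v}$. Constraint $(1)$ guarantees at least one admissible retained stop, so the system is feasible, and it forces $d_{acc}^x(u)=a_{u,v^{*}}$ for the selected $v^{*}$, i.e.\ exactly the access time to the nearest retained stop.

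Once $(2)$ is rewritten this way, I would collect the pieces: the reformulated program has an affine objective, a finite list of affine constraints in $(x, d_{acc}^x, y)$, integrality on $x$ and on the selectors $y$, and continuity on $d_{acc}^x$; by definition this is a mixed-integer linear program, and it has the same feasible $x$'s and the same optimal value as $(P)$. The main obstacle is precisely this $\min$-linearization: writing a correct linear system rather than a big-$M$ relaxation in the wrong direction, and checking that minimizing the objective together with the selector constraints reproduces the equality in $(2)$. Everything else reduces to observing that the data $OD$, $PC$, $d$ and the thresholds are constants, which makes the remaining rows affine by inspection.
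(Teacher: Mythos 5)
Your overall architecture matches the paper's proof: the objective and constraints $(1)$, $(3)$, $(4)$ are affine by inspection, and constraint $(2)$ is linearized by introducing, for each $u$, binary selector variables summing to one, together with two-sided big-$M$ bounds that pin $d_{acc}^x(u)$ to the selected term. (Your extra coupling $y_{u,v}\le x(v)$ is not in the paper and is not actually needed, since the two-sided bounds already force the selector onto the argmin, but it is harmless.) There is, however, a quantitative error in your lower-bound constraint that breaks the equivalence: you use slack $M$, where $M$ is, as the paper defines it, an upper bound on the $d(u,v)$, but your terms $a_{u,v}=d(u,v)+(1-x(v))M$ can be as large as $2M$. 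Concretely, take a deleted stop $v\in D_u^k$, so $x(v)=0$ and hence $y_{u,v}=0$ by your coupling. Your constraint $a_{u,v}-M(1-y_{u,v})\le d_{acc}^x(u)$ then reads $d(u,v)+M-M\le d_{acc}^x(u)$, i.e.\ $d(u,v)\le d_{acc}^x(u)$. Since your other constraints force $d_{acc}^x(u)=d(u,v^*)$ with $v^*$ the nearest \emph{retained} stop in $D_u^k$, this inequality is violated whenever some deleted stop of $D_u^k$ lies farther from $u$ than $v^*$ --- the generic situation once a positive fraction of stops is deleted, which is the whole point of the model. So your reformulated MILP cuts off legitimate solutions of $(P)$ (for typical data it is simply infeasible), and the claimed equivalence between the reformulation and $(P)$ fails.

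The repair is exactly the paper's choice of constant: the slack must dominate $\max_{v} a_{u,v}\le 2M$, so the lower bound should be $a_{u,v}-2M(1-y_{u,v})\le d_{acc}^x(u)$. The paper's version, $X_{\hat u,v}-M'(1-y_v)-\sum_{v'\ne v}M'y_{v'}\le d_{acc}^x(\hat u)$, is the same thing in disguise: under $\sum_v y_v=1$ the subtracted quantity equals $2M'(1-y_v)$, giving slack $2M'$ on every non-selected term. With that one constant corrected, the rest of your argument --- the selected $v^*$ forces equality from both sides, the upper bounds force $v^*$ to be the argmin, and constraint $(1)$ guarantees a retained candidate so the selector system is feasible --- goes through and coincides with the paper's proof.
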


\begin{proof}
As the objective function, and the constraints $(1)$, $(3)$ and $(4)$ are linear with regards to the decision variables, the only thing to show is the linearity of constraints $(2)$.

Let us take one particular $\hat{u} \in U$. We will prove that the constraint $d_{acc}^{x}(\hat{u})  =  \min\limits_{v \in D_{\hat{u}}^k}{d(\hat{u}, v) + (1-x(v))M}$ can be linearized.
For convenience, let us write for all $v \in D_{\hat{u}}^k$, $X_{\hat{u},v} = d(\hat{u}, v) + (1-x(v))M$. 

Let us introduce $n^k_{\hat{u}} = |D_{\hat{u}}^k|$ new variables $(y_{1}, .., y_{n^k_{\hat{u}}}) \in \{0,1\}^{n^k_{\hat{u}}}$ and $M'$ such that $M'$ is a upper bound of the $d(\hat{u},v)$ (we can take for instance $M' = M$). The constraint $(2)$ related to $\hat{u}$ can now be written as: $d_{acc}^{x}(\hat{u}) = \min\limits_{v \in D_{\hat{u}}^k}{X_{\hat{u},v}}$.

Let us define the following linear constraints:

$$
\begin{array}{llllll}
\label{linearization}
    & d_{acc}^{x}(\hat{u}) & \leq & X_{\hat{u},v} & \forall v \in D_{\hat{u}}\\
    & X_{\hat{u},v} - M'(1-y_v) - \sum\limits_{v' != v}{M'y_{v'}} & \leq & d_{acc}^{x}(\hat{u}) & \forall v \in D_{\hat{u}}\\
    & \sum\limits_{v \in D_{\hat{u}}^k}y_v & = & 1
    
\end{array}
$$

The last constraint induces that only one variable $y_{v^*}$ is equal to $1$, and the other ones are equal to $0$. The first set of constraint induces that $d_{acc}^{x}(\hat{u}) \leq \min\limits_{v \in D_{\hat{u}}^k}{X_{\hat{u},v}}$. The second set of constraints induces the only $v$ that verifies  $X_{\hat{u},v} \leq d_{acc}^{x}(\hat{u})$ is $v^*$. Thus, we have $d_{acc}^{x}(\hat{u}) = \min\limits_{v \in D_{\hat{u}}^k}{X_{\hat{u},v}} = X_{\hat{u},v^*}$. So we have linearized the constraint $(2)$ for $\hat{u} \in U$, thanks to the linear constraints above.

The same method can be applied for all $u\in U$, which proves the proposition.
\end{proof}

\begin{thm}

The two following propositions hold:

\begin{enumerate}[label=$\roman*)$]
    \item There is a one-to-one correspondence between the feasible solutions of $(P)$ and the feasible solutions of the OSDNP;
    \item There is a one-to-one correspondence between the optimal solutions of $(P)$ and the optimal solutions of the OSDNP.

\end{enumerate}
\end{thm}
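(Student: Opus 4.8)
The plan is to build an explicit bijection between $\mathcal{V_F}$ and $\mathcal{P_F}$ from the indicator vector of the retained stops, to check that it matches the four constraints of $(P)$ against the three constraints $A)$--$C)$ of the OSDNP one at a time, and finally to observe that the two objective functions differ by a constant, which transfers the bijection to optimal solutions.

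I would first define $\Phi:\mathcal{V_F}\to\mathcal{P_F}$ by sending $V_{sol}$ to the vector $x$ with $x(v)=1$ iff $v\in V_{sol}$, the companion variables $d_{acc}^{x}$ being then forced by constraint $(2)$; since $x$ fixes $d_{acc}^{x}$ through $(2)$, the only genuine degree of freedom is the $0/1$ vector $x$. The inverse $\Psi$ would send a feasible $(x,d_{acc}^{x})$ to $V_{sol}=\{v\in V: x(v)\ge 1\}$. Before this is an honest bijection I must dispose of an integrality point: if $x(v)\ge 2$ for some $v\in D_{u}^{k}$, then $X_{u,v}=d(u,v)+(1-x(v))M\le d(u,v)-M\le 0$, so $(2)$ would drive $d_{acc}^{x}(u)\le 0$, which is incompatible with strictly positive access times; reading the formulation with $x$ binary (as intended), $\Phi$ and $\Psi$ are then manifestly mutually inverse.

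The heart of part $i)$ is the identity $d_{acc}^{x}(u)=d_{acc}^{V_{sol}}(u)$ for corresponding pairs. By $(2)$, $d_{acc}^{x}(u)=\min_{v\in D_{u}^{k}}\big(d(u,v)+(1-x(v))M\big)$; for binary $x$ the bracket equals $d(u,v)$ on $V_{sol}\cap D_{u}^{k}$ and $d(u,v)+M$ off it, so, because constraint $(1)$ guarantees $V_{sol}\cap D_{u}^{k}\neq\emptyset$ and $M$ dominates every access time, the minimum is attained on $V_{sol}\cap D_{u}^{k}$ and equals $\min_{v\in V_{sol}\cap D_{u}^{k}}d(u,v)$. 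This in turn equals $d_{acc}^{V_{sol}}(u)=\min_{v\in V_{sol}}d(u,v)$: any $v\in V_{sol}\setminus D_{u}^{k}$ has $d(u,v)>k[u]d_{acc}(u)\ge d(u,w)$ for every $w\in V_{sol}\cap D_{u}^{k}$, so no stop outside $D_{u}^{k}$ can beat the best stop inside it, and the global minimiser over $V_{sol}$ already lies in $D_{u}^{k}$. With this identity in hand the constraint-by-constraint matching is routine: $(1)$ says exactly $V_{sol}\cap D_{u}^{k}\neq\emptyset$, i.e. $d_{acc}^{V_{sol}}(u)\le k[u]d_{acc}(u)$, which is $A)$; using the identity together with the simplified expression for $Opt_{V_{sol}}$ supplied by the Assumption, $(3)$ becomes $Opt_{V_{sol}}(u_1,u_2)\le\alpha\,Opt_{V}(u_1,u_2)$, which is $B)$; and since $\sum_{v}x(v)=|V_{sol}|$, $(4)$ is precisely $C)$. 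Hence $\Phi,\Psi$ carry feasible solutions to feasible solutions and are inverse, which proves $i)$.

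For $ii)$ I would compare the objectives. By the identity above the objective of $(P)$ equals $\sum_{(u_1,u_2)}OD[u_1][u_2]\big(d_{acc}^{V_{sol}}(u_1)+d_{acc}^{V_{sol}}(u_2)\big)$, whereas $TWT_{V_{sol}}=\sum_{(u_1,u_2)}OD[u_1][u_2]\big(d_{acc}^{V_{sol}}(u_1)+d_{acc}^{V_{sol}}(u_2)+PC[acc(u_1)][acc(u_2)]\big)$, so the two differ by $\sum_{(u_1,u_2)}OD[u_1][u_2]\,PC[acc(u_1)][acc(u_2)]$, a quantity that depends only on the instance data $OD$, $PC$ and $acc$ and not on $V_{sol}$. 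Minimising $TWT_{V_{sol}}$ is therefore the same as minimising the objective of $(P)$, so the bijection of $i)$ restricts to a bijection between optimal solutions, giving $ii)$. I expect the main obstacle to be exactly the identity $d_{acc}^{x}(u)=d_{acc}^{V_{sol}}(u)$ — in particular the argument that restricting the minimum in $(2)$ to $D_{u}^{k}$ never discards the true minimiser, where the definition of $D_{u}^{k}$ and the nonemptiness furnished by $(1)$ are both essential; the integrality remark is the only other place demanding care.
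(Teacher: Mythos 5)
Your proof is correct and follows essentially the same route as the paper: the same indicator-vector bijection between $\mathcal{V_F}$ and $\mathcal{P_F}$, the same constraint-by-constraint matching of $(1)$, $(3)$, $(4)$ against $A)$, $B)$, $C)$, and the same observation that the two objectives differ by the constant $\sum_{(u_1,u_2)}OD[u_1][u_2]\,PC[acc(u_1)][acc(u_2)]$. Your explicit verification of the identity $d_{acc}^{x}(u)=d_{acc}^{V_{sol}}(u)$ and your remark on integrality of $x$ are details the paper asserts without proof (it simply says ``by noticing that $d_{acc}^{x_{V_{sol}}} = d_{acc}^{V_{sol}}$''), so your write-up is, if anything, more careful at exactly the points you identified as delicate.
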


\begin{proof}
\fbox{$i)$} Let $V_{sol} \in \mathcal{V_F}$ be a feasible solution of the OSDNP. We define $x_{V_{sol}} \in {\Z^+}^{n_t}$ a $0/1$ vector describing whether or not a bus stop is in the solution. More formally, for all $v \in V$:

$$
x_{V_{sol}}(v) = \left\{
    \begin{array}{ll}
        1 & \mbox{if } v \in V_{sol} \\
        0 & \mbox{otherwise}
    \end{array}
\right.
$$

Note that the decision variables $d_{acc}^{x}$ are entirely set by the definition of $x$ with constraint $(2)$ of $(P)$. Let us prove that $(x_{V_{sol}}, d_{acc}^{x_{V_{sol}}})$ is a feasible solution of $(P)$.

As $V_{sol} \in \mathcal{V_F}$, we know by hypothesis that 

\begin{enumerate}[label=\Alph*)]
     \item $\forall u \in U$, $d_{acc}^{V_{sol}}(u) \leq k[u]\times d_{acc}(u)$
     \item $\forall (u_1, u_2) \in U^2$, $Opt_{V_{sol}}(u_1, u_2) \leq \alpha \times Opt_{V}(u_1, u_2)$ 
    \item $|V_{sol}| \leq (1-p_{elim})|V|$
 \end{enumerate}

From $A)$, we know that for all $u \in U$, $\min_{v \in V_{sol}}{d(u,v)} \leq k[u]d_{acc}(u)$. Let $v^* \in V_{sol}$ such that $v^* = argmin_{v \in V_{sol}}{d(u,v)}$. We have $d(u, v^*) \leq k[u]d_{acc}(u)$. So we have $v^* \in V_{sol} \cap D_u^k$, so $V_{sol} \cap D_u^k$ is not an empty set and constraint $(1)$ is satisfied.

From $B)$ and equation \ref{optVsolAssumptionSimpl}, and by noticing that $d_{acc}^{x_{V_{sol}}} = d_{acc}^{V_{sol}}$ we have $d_{acc}^{V_{sol}}(u_1) + d_{acc}^{V_{sol}}(u_2) + PC[acc(u_1)][acc(u_2)] \leq \alpha(d_{acc}^{V}(u_1) + d_{acc}^{V}(u_2) + PC[acc(u_1)][acc(u_2)])$. Thus constraint $(3)$ is satisfied.

From $C)$ we know that the number of $0$ in $x_{V_{sol}}$ must be more than $p_{elim}n_t$, so constraint $(4)$ is satisfied. Thus, $(x_{V_{sol}}, d_{acc}^{x_{V_{sol}}})$ satisfies all the constraints of $(P)$, so it is a feasible solution for $(P)$. 

Reciprocally, let $x \in \mathcal{P_F}$ be a feasible solution of $(P)$. We define $V_{sol} \subseteq V$ as $V_{sol} = \{v \in V | x(v) = 1\}$. From $(1)$, we know that for all $u \in U$, $\{v \in V_{sol} | d(u,v) \leq k[u]d_{acc}(u)\} \neq \emptyset$, and {\em a fortiori} $d_{acc}^{V_{sol}}(u) \leq k[u]\times d_{acc}(u)$, and $V_{sol}$ satisfies $A)$.

As $d_{acc}^{x} = d_{acc}^{V_{sol}}$ by the definition of $V_{sol}$, the constraints $(2)$, $(3)$ and equation \ref{optVsolAssumptionSimpl} induce that $V_{sol}$ satisfies $B)$.

Just like before, constraint $(4)$ and the definition of $V_{sol}$ induce that $|V_{sol}| \leq (1-p_{elim})n_t$, $C)$ is satisfied and $(i)$ is proved.

\fbox{$ii)$} As there is a one-to-one correspondence between the feasible solutions of $(P)$ and the feasible solutions of the OSDNP, we just have to prove that the objective functions of both problems are the same. Let $x$ be a feasible solution of $(P)$ and the corresponding $V_{sol} \in \mathcal{V_F}$ defined just like before. We have 

$$
\begin{array}{ccc}
    \min\limits_{x \in \mathcal{P_F}}{\sum\limits_{(u_1, u_2) \in U^2}{OD[u_1][u_2](d^x_{acc}(u_1)+d^x_{acc}(u_2)}})  & = & \min\limits_{x \in \mathcal{P_F}}{\sum\limits_{(u_1, u_2) \in U^2}{OD[u_1][u_2](d^x_{acc}(u_1)+d^x_{acc}(u_2)} + PC[acc(u_1)][acc(u_2)]})\\
     &=& \min\limits_{V \in \mathcal{V_F}}{\sum\limits_{(u_1, u_2) \in U^2}}{OD[u_1][u_2](d^V_{acc}(u_1)+d^V_{acc}(u_2) + PC[acc(u_1)][acc(u_2)]})\\
     &=& \min\limits_{V \in \mathcal{V_F}}{\sum\limits_{(u_1, u_2)\in U^2}{OD[u_1][u_2]Opt_{V(u_1, u_2)}}}\\
     &=& \min\limits_{V \in \mathcal{V_F}}{TWT_V}
\end{array}
$$

Thus, if $V \in \mathcal{V_F}$ minimizes $TWT_V$, the corresponding $x \in \mathcal{P_F}$ minimizes $\sum\limits_{(u_1, u_2) \in U^2}{OD[u_1][u_2](d^x_{acc}(u_1)+d^x_{acc}(u_2)}$, and $ii)$ is proved, which proves the theorem.

\end{proof}

We are now able to find an optimal solution of the OSDNP by finding an optimal solution of $(P)$. Such a solution can be found using standard MILP solving algorithms. We will now apply the previous results with real data on the urban zone of Lyon, France.

\section{Case Study}
\label{caseStudy}

We give a case study of the model we described in the previous section. In this application, we consider the Lyon's urban zone, in France. We consider the bus network, which consists of $1144$ bus stops and $1464$ urban areas. A map of the corresponding urban zone is represented figure \ref{fig:all_open}

\begin{figure}[ht]
    \centering
    \includegraphics[width=8cm]{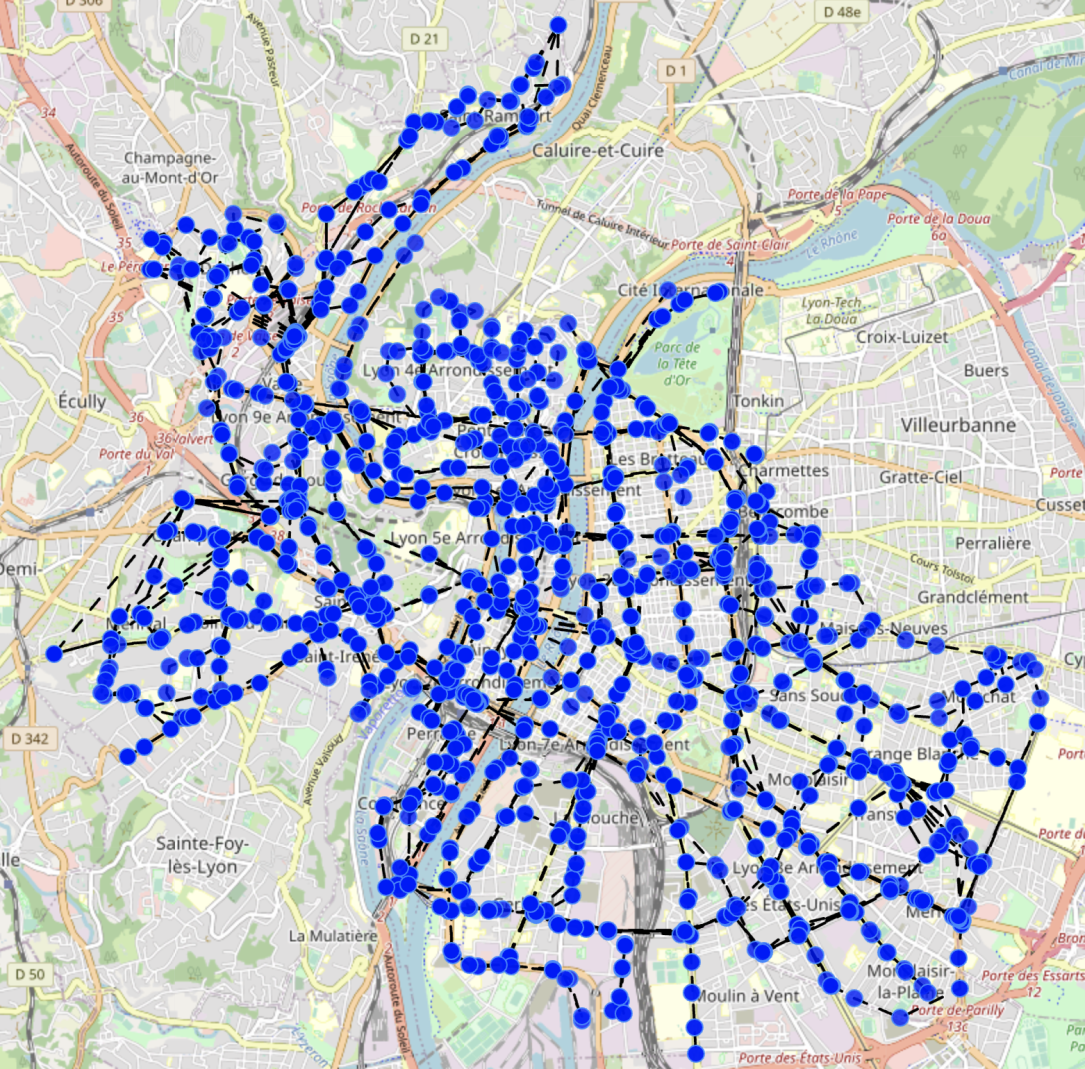}
    \caption{Map of Lyon's urban area and the $1144$ bus stops in it}
    \label{fig:all_open}
\end{figure}

We choose the values of the parameters as follow:
\begin{itemize}
    \item $\alpha = 2$
    \item for all $u \in U$, $k[u] = 2$
    \item the origin/destination matrix has been set with real observations of itineraries between the different urban areas
    \item in our application, $p_{elim}$ will take values between $0.1$ and $0.6$ (for values above $0.6$, no feasible solution exist in our case, due to the value of the other parameters)
\end{itemize}

This choice of parameter seems reasonable for real case application, since the increase of access time and travel time is acceptable for most people with this choice of parameters.

\subsection{Results}
\label{sec:Results}

We use the optimization software CPLEX \cite{cplex2009v12} to solve the MILP on the real instances described above. We represent the solution in figure \ref{fig:array_results} by representing the deleted bus stops in red while the remaining ones stay in blue. 

\begin{figure}[ht]
\begin{center}
\begin{tabular}{ccc}
  \includegraphics[width=5.5cm]{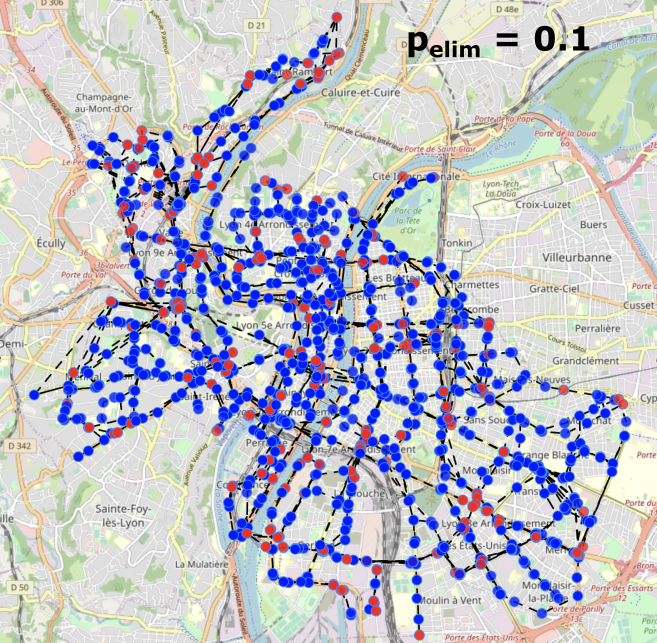}	
  &
  \includegraphics[width=5.5cm]{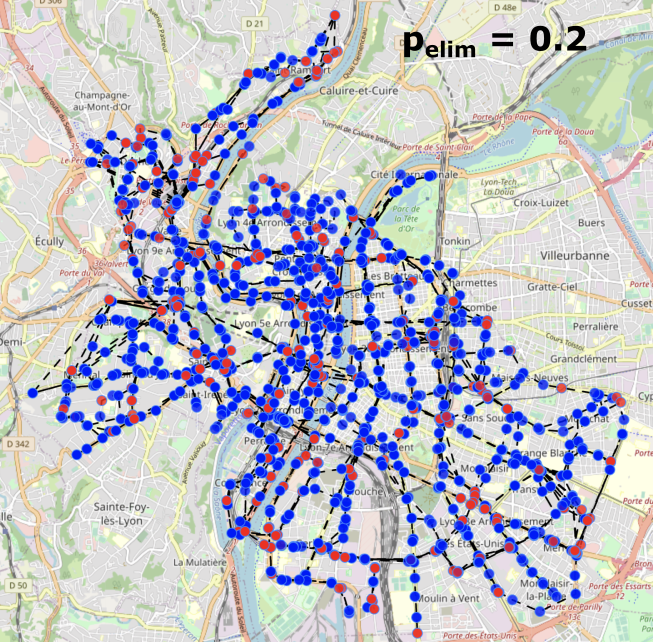}
  &
  \includegraphics[width=5.5cm]{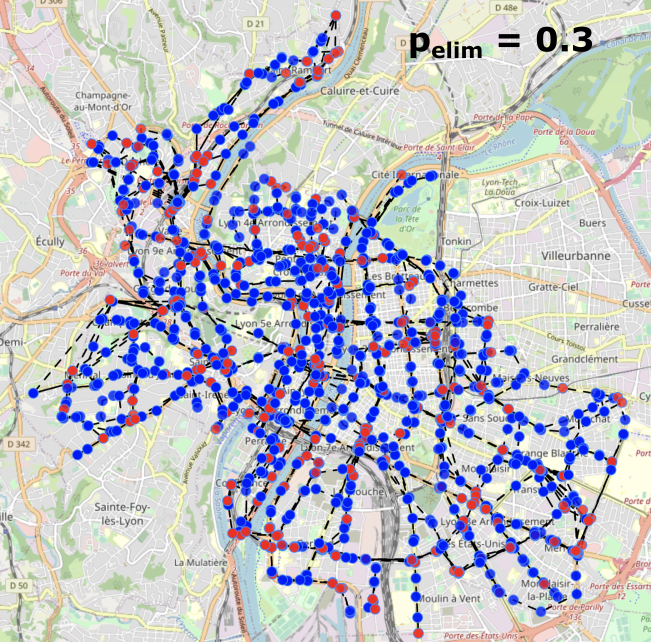}\\
  \includegraphics[width=5.5cm]{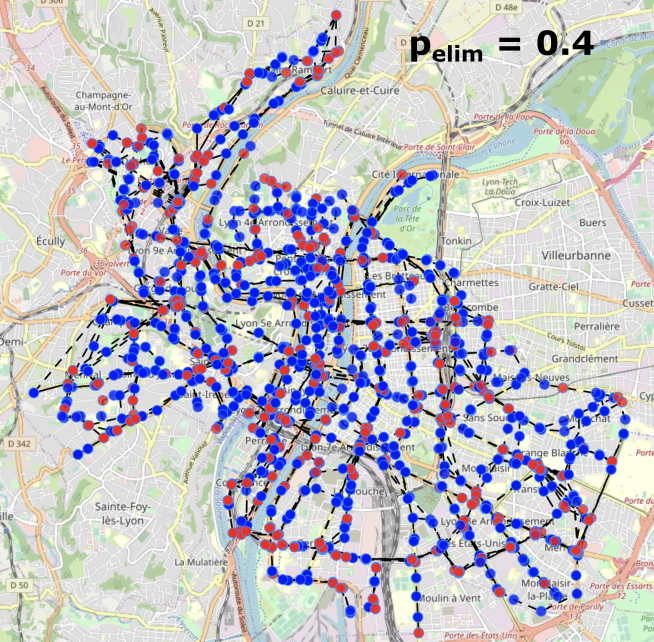}	
  &
  \includegraphics[width=5.5cm]{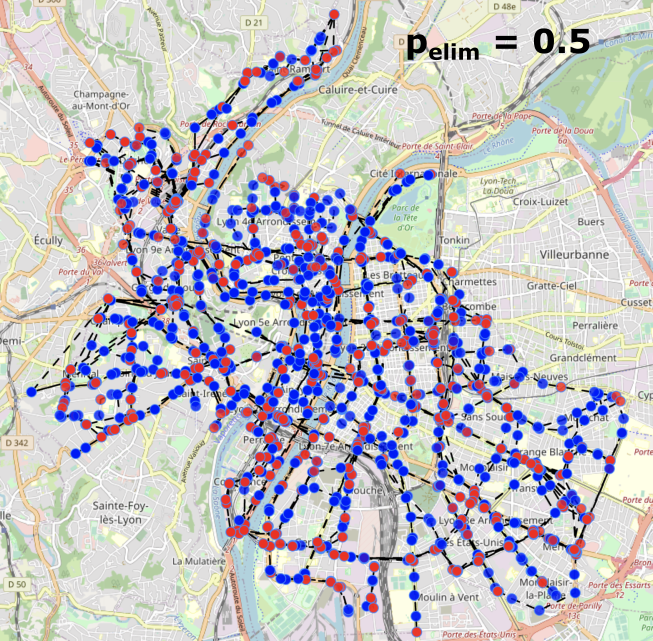}
  &
  \includegraphics[width=5.5cm]{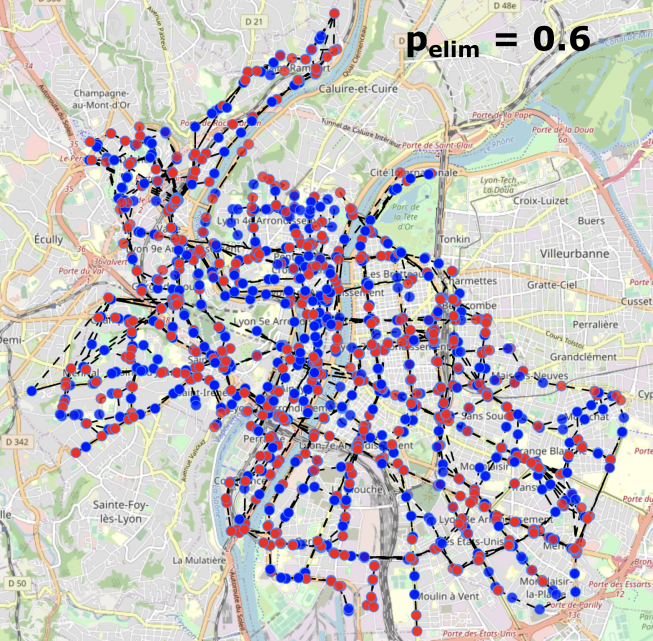}
\end{tabular}
\end{center}

  \caption{Map of remaining open stops in blue, and deleted stops in red for different values of $p_{elim}$.}
  \label{fig:array_results}
\end{figure}

One interesting thing is the number of deleted bus stops with regards to $p_{elim}$. We expect the number of deleted bus stops to be exactly equal to $\lceil p_{elim}n_t \rceil$. Figure \ref{fig:arrayNbDeleted} shows the number of deleted node with respect to $p_{elim}$

\begin{figure}[ht]
    \centering
    \label{fig:arrayNbDeleted}
\begin{tabular}{|c|c|c|}
    \hline
    $p_{elim}$ &  $\lceil p_{elim}n_t \rceil $ & \# deleted bus stops \\
    \hline
     $0.1$ & $115$ & $266$ \\
     \hline
     $0.2$ & $229$ & $295$ \\
     \hline
     $0.3$ & $344$ & $344$ \\
     \hline
     $0.4$ & $458$ & $458$ \\
     \hline
     $0.5$ & $572$ & $572$ \\
     \hline
     $0.6$ & $687$ & $687$ \\
     \hline
\end{tabular}
\label{arr:nbDeleted}
\end{figure}

We notice that for $p_{elim} = 0.1$ and $p_{elim} = 0.2$, we delete more stops that we are imposed to. This comes from the fact that for low values of $p_{elim}$ the optimal solution does not need that much number of bus stops, because the number of urban areas is limited with regards to the number of bus stops. Increasing the number of urban zones ($n_u$) would solve this side-effect. However, this would also increase the computational time needed to find a solution.

\subsection{Analysis}
\label{sec:Analysis}

Now that we have solved instances for several values of $p_{elim}$, we would like to evaluate the impact of such solutions on the network with an operational point of view. For a public transportation network designer, the number of lines is a very important input. The number of buses, the number of bus drivers, the complexity and the number of the transit times are directly linked to it. Consequently, even if we insure a decrease of the number of stops, this does not induce such a decrease (or a decrease at all) of the number of lines. Let us analyze the number of stops that are still open for each line in order to be able to build {\em an operational decision tool}. Ideally, we want this percentages to be either close to $0$ or close to $1$. If it is the case, we can keep the lines which percentage is close to $1$ and delete the other lines. 

For some lines, the percentage of remaining open stop can drastically change. We give an example of the possible differences for $p_{elim} = 0.5$ in figure \ref{fig:pourcOpen0.5}

\begin{figure}[ht]
\begin{center}
\begin{tabular}{ccc}
  \includegraphics[height=5.5cm]{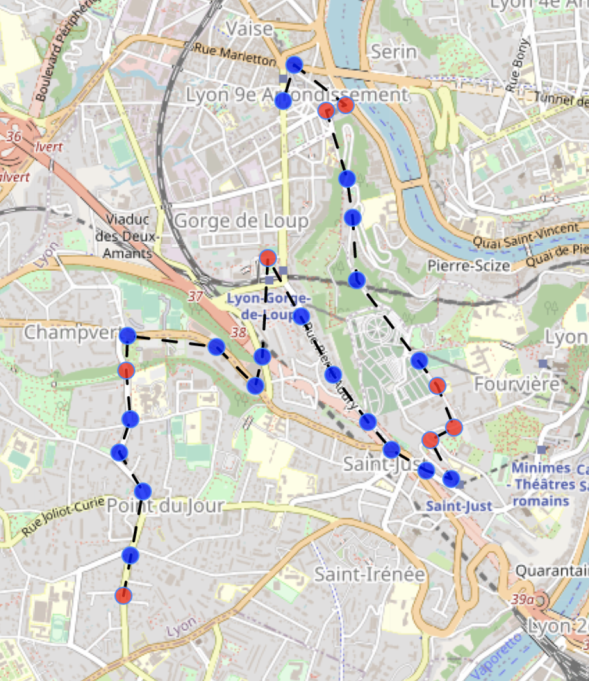}	
  &
  \includegraphics[height=5.5cm]{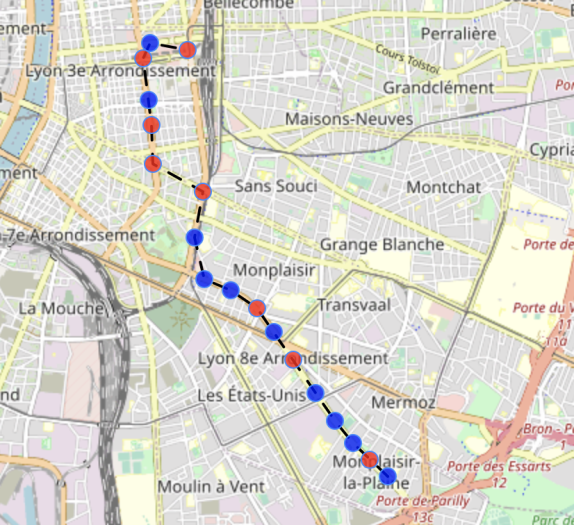}
  &
  \includegraphics[height=5.5cm]{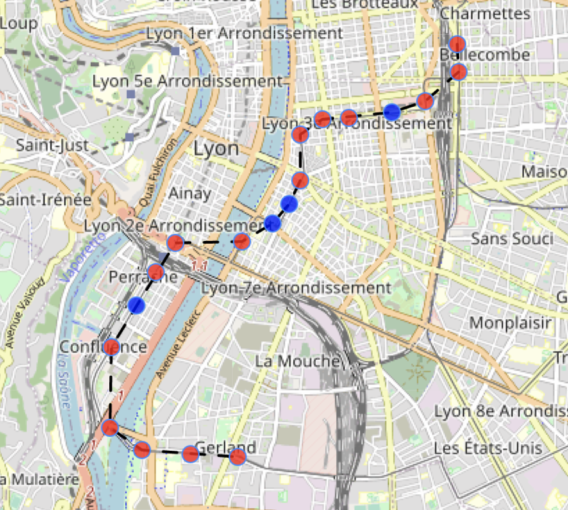}
\end{tabular}
\end{center}
\label{fig:pourcOpen0.5}
\caption{Lines 90, C25 and T1 (or some sublines) with respectively $71\%$, $55.5\%$ and $21\%$ of remaining open stops}
\end{figure}

So the profiles of the lines can be different. We can plot the histogram of the percentages of remaining open stops for every lines that contain more than 10 bus stops, (we keep only such lines not to have too much side-effects). We represent such an histogram for $p_{elim} = 0.5$ in figure \ref{fig:hist0.5}.

\begin{figure}[ht]
    \centering
    \includegraphics[width = 13cm]{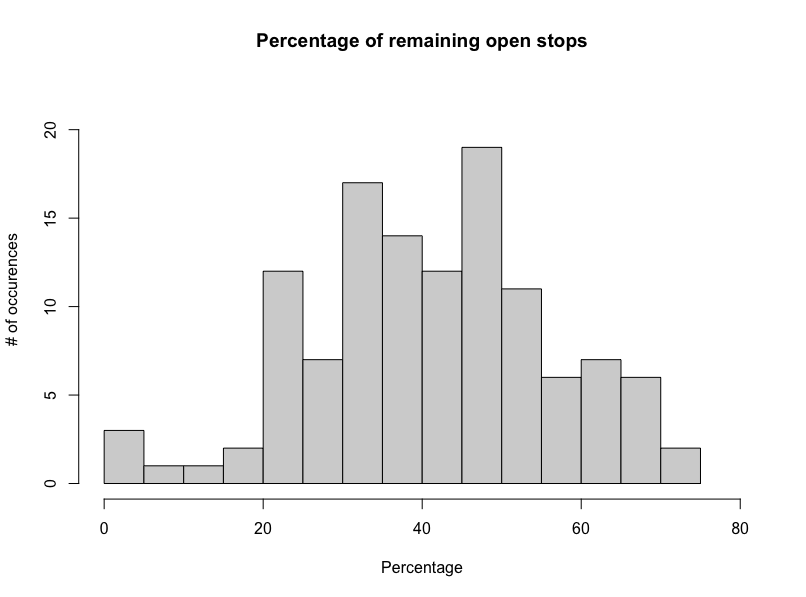}
    \caption{Histogram of the percentage of remaining open stops for every line containing more than 10 stops for $p_{elim} = 0.5$}
    \label{fig:hist0.5}
\end{figure}

First of all, we can see that the ideal case is not reached, since most values are close to the average value. Even if we cannot conclude that the percentage of open stops follows a normal law since the p-value of the Shapiro-Wilk test is $p = 0.2072$, most values are neither close to $0$ nor close to $1$.

However, we still want to help the network designers to know which lines to keep and which ones to delete. To do so, we give him some scenarios based on a {\em threshold $t \in [0,1]$}. This threshold represent the percentage above which we will keep the lines.

More formally, let $l = \{s_1, s_2, .., s_{|l|}\}$ a line which is a sequence of stops ($s_1, .., s_{|l|} \in V$). For an optimal solution $V^*_{sol}$, we define the percentage of remaining open stops of $l$ as $p_{ros}^{V^*_{sol}}(l) = \frac{|V^*_{sol}|}{|l|}$. Let $L$ be the set of all lines. We define $L_{k}^t$ ($k$ for keep) and $L_{d}^t$ ($d$ for deleted) a partition of $L$ such that $L_{k}^t = \{l\in L | p_{ros}^{V^*_{sol}}(l) \geq t\}$ and $L_{d}^t = \{l\in L | p_{ros}^{V^*_{sol}}(l) < t\}$. The strategy here is, for several values of $t$, to evaluate the sets $L_{k}^t$ and $L_{d}^t$. If $t$ is close to $0$, then $L_{k}^t$ will be close to $L$, and the percentage of lines deleted will be small, maybe too small for the network designer. If $t$ is close to $1$, then $L_{d}^t$ will be close to $L$ and the network efficiency will be widely degraded. So we propose different scenarios to the network designer, who will choose one of them with regards to the trade-off between cost and efficiency her prefers.

To build a scenario $\mathcal{S}$, we begin with choose a $p_{elim}$ and we compute an optimal solution $V^*_{sol}$ thanks to our model described in section \ref{MILP}. Then we choose $t \in [0,1]$ and compute the corresponding $L_{k}^t$ and $L_{d}^t$. We arise with a new solution (which is not feasible in general) $V_{\mathcal{S}} = V_{sol}^* \setminus L_{d}^t$, which contains the stops in $V_{sol}^*$, minus the stops of the lines in $L_{d}^t$. On the one hand, we have deleted exactly $|L_{d}^t|$ lines from $L$, which would help the network designer to manage the network. On the other hand, since the solution is no longer feasible (in general), there will be some $u \in U$ for which the access time will be too long with regards to our original constraint (constraint $(2)$ of $(P)$). Let $UF_\mathcal{S}$ be the set of such $u \in U$. Let us call $uf(u) = k[u]\times d_{acc}(u) - d_{acc}^{V_{\mathcal{S}}}(u)$, where $d_{acc}^{V_{\mathcal{S}}}(u)$ is the minimal access time to a bus stop in $V_{\mathcal{S}}$: $d_{acc}^{V_{\mathcal{S}}}(u) = \min_{v \in V_{\mathcal{S}}}{d(u, v)}$. Then $UF_\mathcal{S} = \{u \in U | uf(u) < 0\}$. From a scenario $\mathcal{S}$, we also give the histogram $\mathcal{H}_{uf}$ of the $uf(u)$ which can be an interesting input for the network designer.  We describe schematically the construction of a scenario $\mathcal{S}$ in figure \ref{fig:scenarioConstuction}

\begin{figure}[ht]
    \centering
    \includegraphics[width = 15cm]{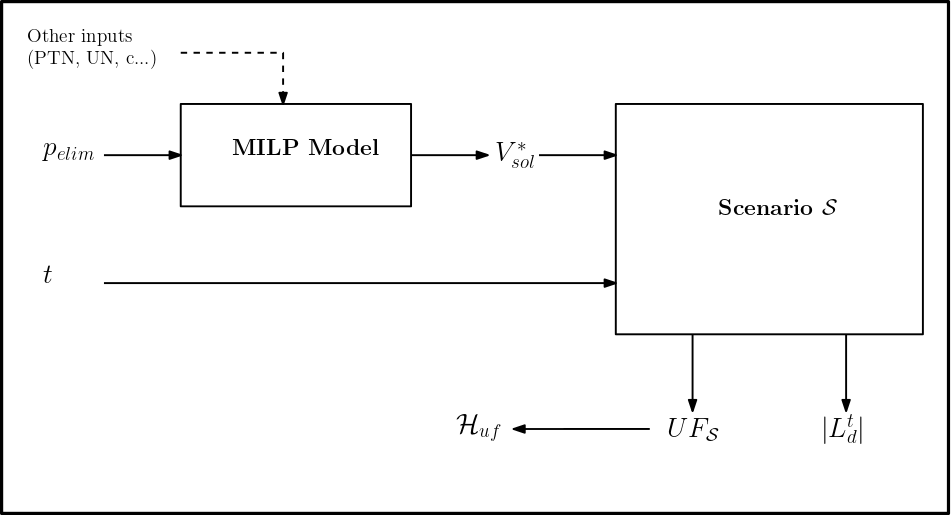}
    \caption{Schema of the construction of a scenario $\mathcal{S}$}
    \label{fig:scenarioConstuction}
\end{figure}

We present $7$ scenarios for $p_{elim} = 0.5$ and $t$ values from $0.1$ to $0.7$ in figures \ref{fig:resultsArray1} and \ref{fig:resultsArray2}. In this instance, we have $120$ different lines containing more than $10$ stops, and the number of urban zones is $n_u = 1464$.

\begin{figure}[ht]
\begin{center}
\begin{tabular}{|c|c|c|c|}
   \hline
   $t$ & $|L_d^t|$ & $|UF_\mathcal{S}|$ & $\mathcal{H}_{uf}$ \\
   \hline
   $0.1$ & $0$ & $0$ & $\emptyset$ \\ 
   \hline
   $0.2$ & $3$ & $2$ & \includegraphics[width = 9.6cm]{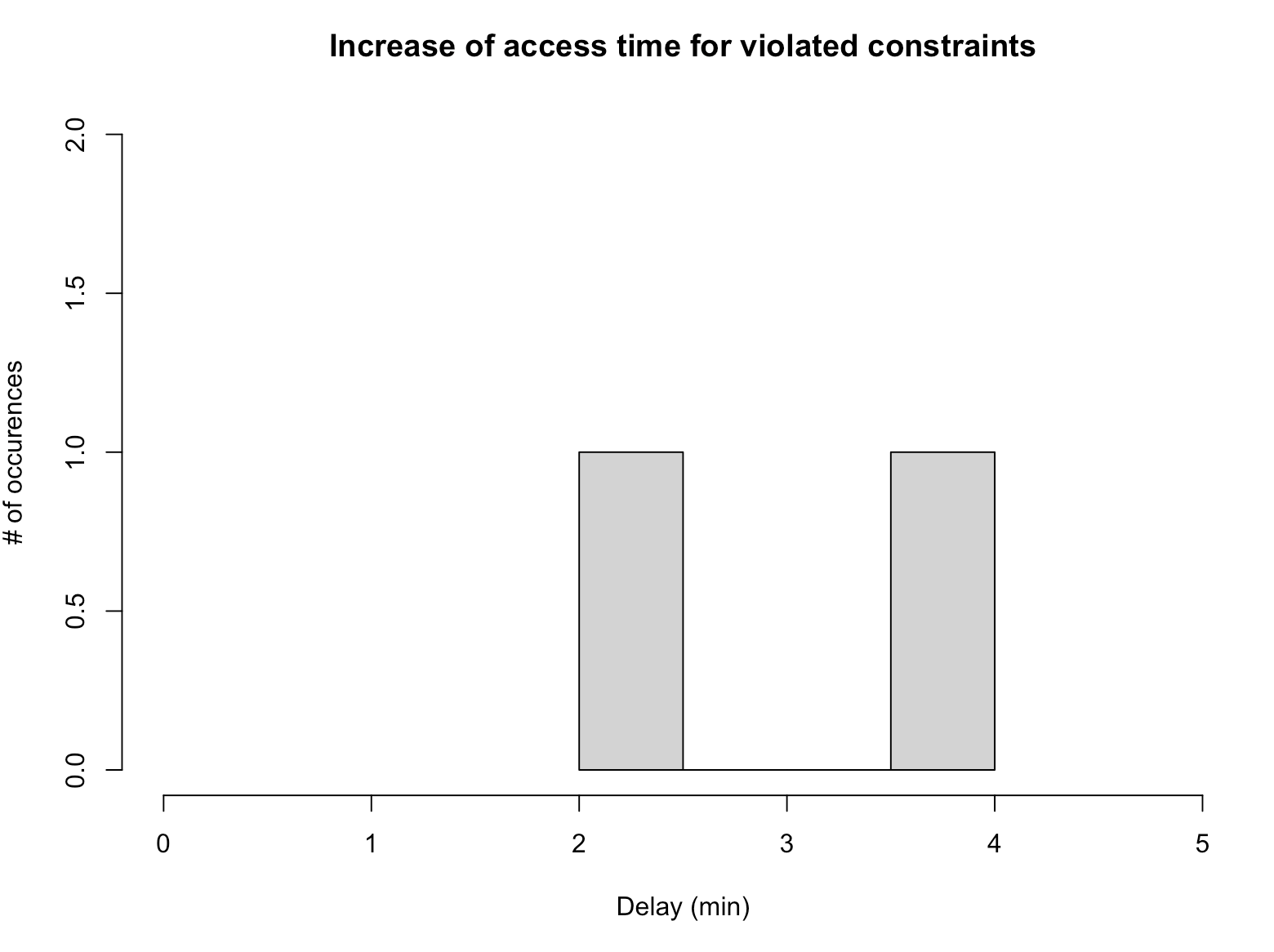} \\ 
   \hline
   $0.3$ & $9$ & $10$ & \includegraphics[width = 9.6cm]{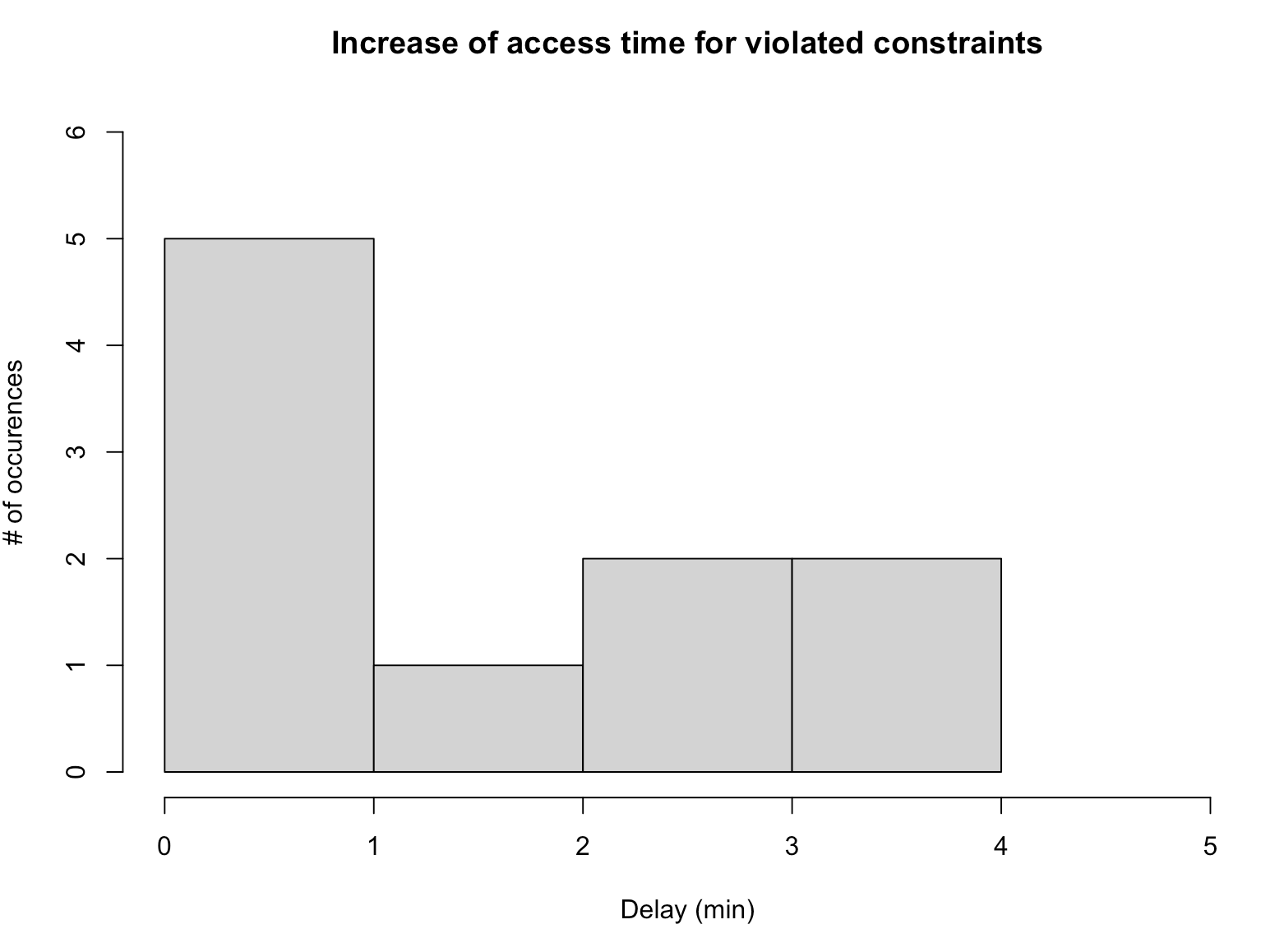} \\ 
   \hline
   $0.4$ & $35$ & $123$ & \includegraphics[width = 9.6cm]{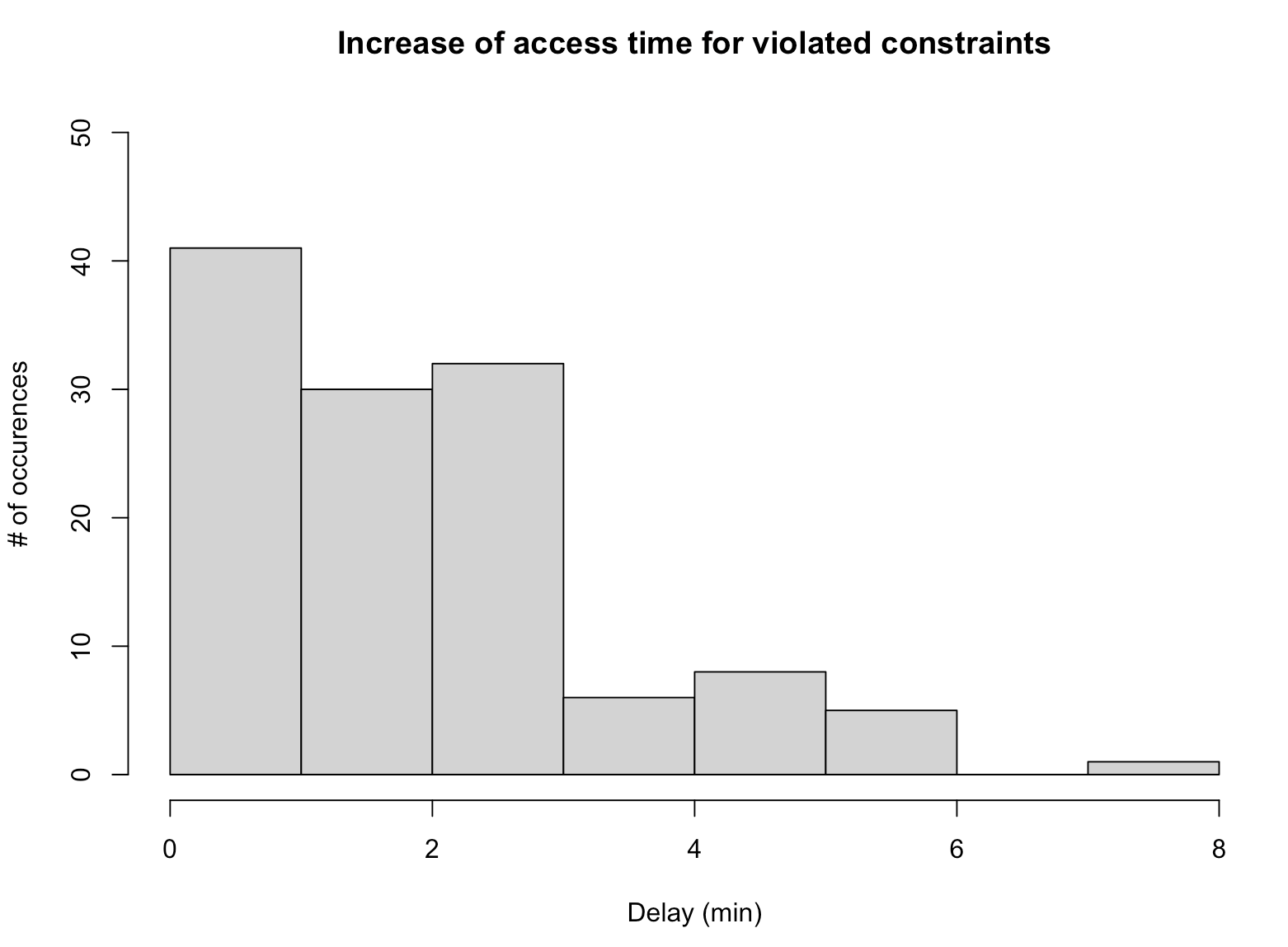} \\
      \hline
\end{tabular}
\end{center}
\caption{$|L_d^t|$, $|UF_\mathcal{S}|$ and $\mathcal{H}_{uf}$ for values of $t$ between $0.1$ and $0.4$ and $p_{elim} = 0.5$}
\label{fig:resultsArray1}
\end{figure}

\begin{figure}[ht]
\begin{center}
\begin{tabular}{|c|c|c|c|}
\hline
   $t$ & $|L_d^t|$ & $|UF_\mathcal{S}|$ & $\mathcal{H}_{uf}$ \\
   \hline
   $0.5$ & $93$ & $572$ & \includegraphics[width = 9.6cm]{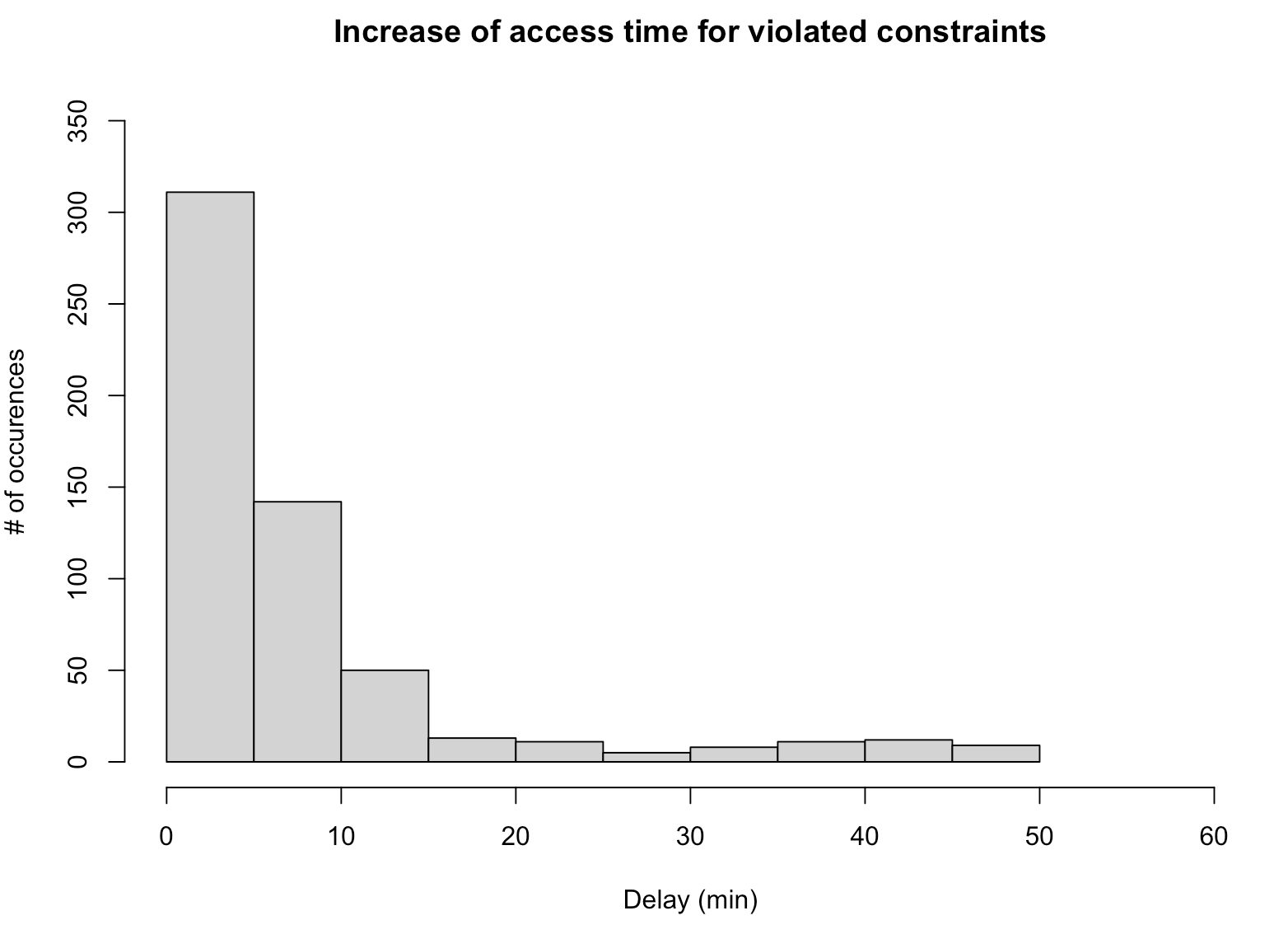} \\ 
   \hline
   $0.6$ & $118$ & $867$ & \includegraphics[width = 9.6cm]{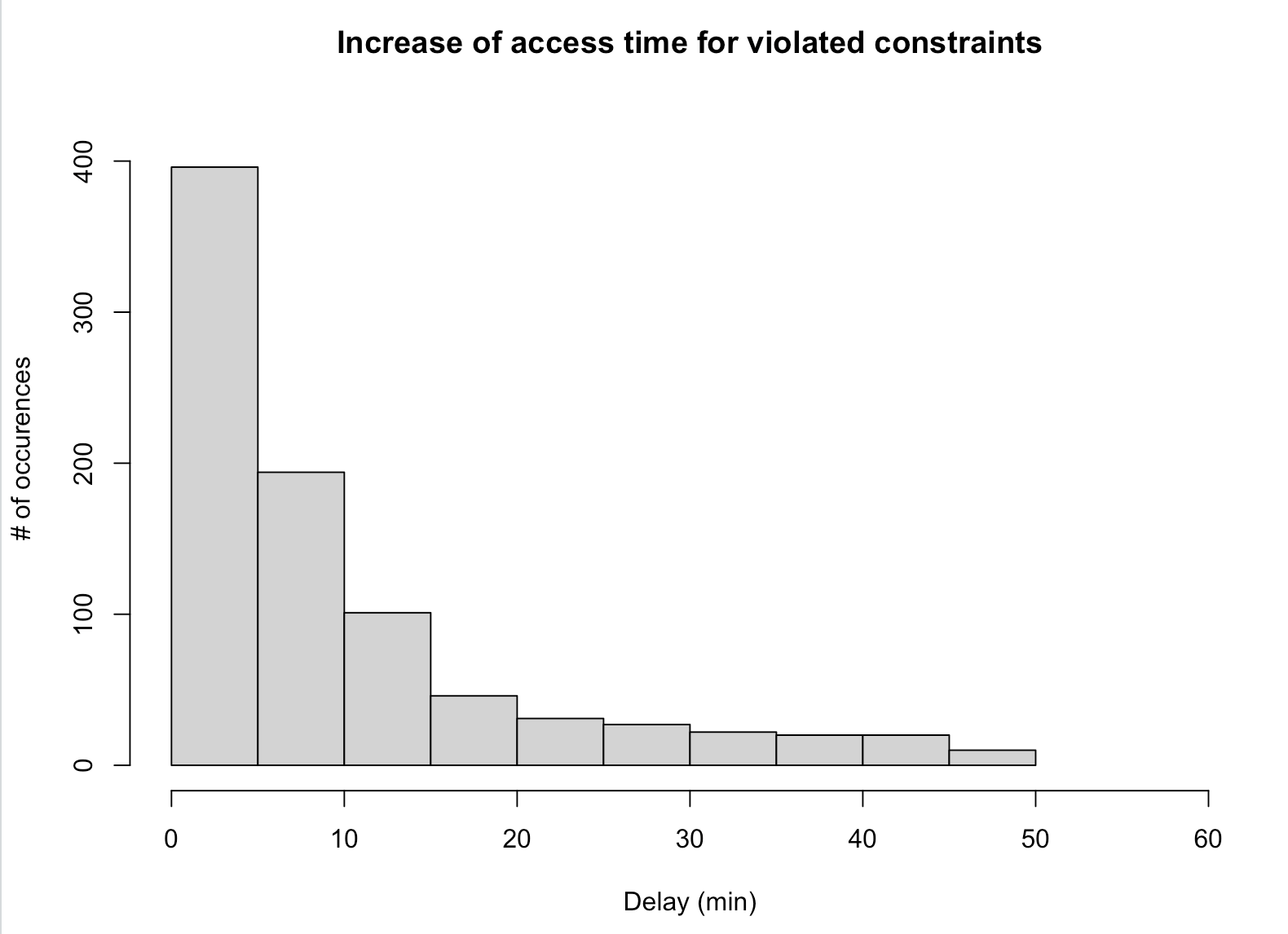} \\ 
   \hline
   $0.7$ & $120$ & $938$ & \includegraphics[width = 9.6cm]{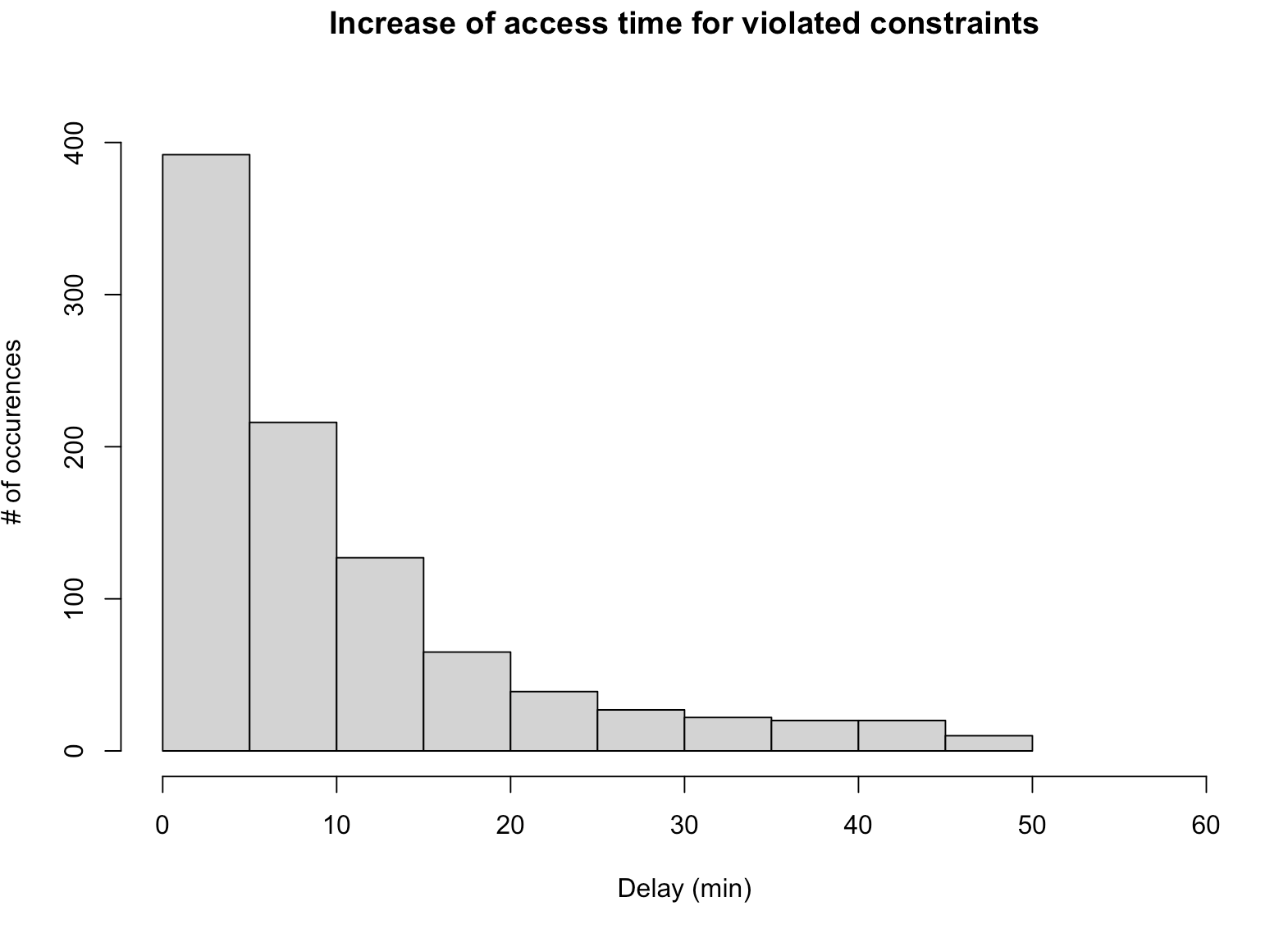}  \\
   \hline
\end{tabular}
\end{center}
\caption{$|L_d^t|$, $|UF_\mathcal{S}|$ and $\mathcal{H}_{uf}$ for values of $t$ between $0.5$ and $0.7$ and $p_{elim} = 0.5$}
\label{fig:resultsArray2}
\end{figure}

Note that even for $t$ close to $1$, not all urban zones are violated because there are still some stops open on bus lines that contain less than $10$ stops. Moreover, if we increase the value of $t$ ($t = 0.8$, $t = 0.9$..) we obtain the same results than for $t = 0.7$.

In these scenarios, we see that when $t$ approaches the value of $p_{elim}$, both $|L_d^t|$ and $|UF_\mathcal{S}|$ strongly increase, which is coherent with the fact that the percentage of remaining open stops of the lines are mostly around $p_{elim}$.

With these scenarios, a network designer can evaluate the different cases and choose the one which is likely to match his wishes. 

\section{Conclusion and Future Work}
\label{conclusion}

In this paper we have been interested in the highlighting of a subgraph during disturbed conditions. After defining the problem, we have given a Mixed Integer Linear Program modeling the problem. We have solved the model on a real-case application on the Lyon's urban zone. We have presented the results for a reasonable choice of parameters, and for different values of deletion rate. We also have given a decision tool that could be useful for the network designers to choose his best trade-off between costs and efficiency. 

The computational running time is a big issue in our article. Indeed, we made some simplifying hypothesis in order to allow the real-case resolution. A more realistic MILP could have been written (and has been), but its resolution were in our case too long for our real-case application. To go further, it could be interesting to dig into more complex resolution algorithms to get rid of simplifying hypothesis and still get results on real instances. 

Note that the choice of parameters is a real issue with regards to the computational running time too. Indeed, if we take for instance our parameter $k$ (which describes how far the accession time is allowed to increase), then if $k$ increases, the number of potential bus stops to look into can increase, making the corresponding instances hard to solve in reasonable time.

The guideline of this paper has been to highlight a choice of bus stops in order to choose a subset of lines at the end. However, the choice of such lines induces in the general case a set of unfeasible bus stops according to our MILP formulation. Even if we have been able to evaluate the profile of the unfeasibility, we have no guarantee {\em a priori} of the quality of our solution in term of bus lines. This could be a problem in some cases. Another solution could have been to see the problem with a "line" point of view from the beginning, and to define a solution as a subset of lines to  keep. However, this could increase drastically the computational complexity.

\clearpage

\printbibliography

@article{cplex2009v12,
  title={V12. 1: User’s Manual for CPLEX},
  author={Cplex, IBM ILOG},
  journal={International Business Machines Corporation},
  volume={46},
  number={53},
  pages={157},
  year={2009}
}

@article{Tirachini2020,
title = "COVID-19 and public transportation: Current assessment, prospects, and research needs",
abstract = "The COVID-19 pandemic poses a great challenge for contemporary public transportation worldwide, resulting from an unprecedented decline in demand and revenue. In this paper, we synthesize the state-of-the-art, up to early June 2020, on key developments regarding public transportation and the COVID-19 pandemic, including the different responses adopted by governments and public transportation agencies around the world, and the research needs pertaining to critical issues that minimize contagion risk in public transportation in the so-called post-lockdown phase. While attempts at adherence to physical distancing (which challenges the very concept of mass public transportation) are looming in several countries, the latest research shows that for closed environments such as public transportation vehicles, the proper use of face masks has significantly reduced the probability of contagion. The economic and social effects of the COVID-19 outbreak in public transportation extend beyond service performance and health risks to financial viability, social equity, and sustainable mobility. There is a risk that if the public transportation sector is perceived as poorly transitioning to post-pandemic conditions, that viewing public transportation as unhealthy will gain ground and might be sustained. To this end, this paper identifies the research needs and outlines a research agenda for the public health implications of alternative strategies and scenarios, specifically measures to reduce crowding in public transportation. The paper provides an overview and an outlook for transit policy makers, planners, and researchers to map the state-of-affairs and research needs related to the impacts of the pandemic crisis on public transportation. Some research needs require urgent attention given what is ultimately at stake in several countries: restoring the ability of public transportation systems to fulfill their societal role.",
keywords = "COVID-19 virus transmission, Public health, Resilience, Safety, Sustainability",
author = "Alejandro Tirachini and Oded Cats",
year = "2020",
doi = "10.5038/2375-0901.22.1.1",
language = "English",
volume = "22",
pages = "1--21",
journal = "Journal of Public Transportation",
issn = "1077-291X",
publisher = "National Center for Transit Research",
number = "1",
}

@article{liu2020,
  title={The impacts of COVID-19 pandemic on public transit demand in the United States},
  author={Liu, Luyu and Miller, Harvey J and Scheff, Jonathan},
  journal={Plos one},
  volume={15},
  number={11},
  pages={e0242476},
  year={2020},
  publisher={Public Library of Science San Francisco, CA USA}
}

@article{di2020,
  title={Eating habits and lifestyle changes during COVID-19 lockdown: an Italian survey},
  author={Di Renzo, Laura and Gualtieri, Paola and Pivari, Francesca and Soldati, Laura and Attin{\`a}, Alda and Cinelli, Giulia and Leggeri, Claudia and Caparello, Giovanna and Barrea, Luigi and Scerbo, Francesco and others},
  journal={Journal of translational medicine},
  volume={18},
  pages={1--15},
  year={2020},
  publisher={Springer}
}

@article{wilbur2020,
  title={Impact of COVID-19 on Public Transit Accessibility and Ridership},
  author={Wilbur, Michael and Ayman, Afiya and Ouyang, Anna and Poon, Vincent and Kabir, Riyan and Vadali, Abhiram and Pugliese, Philip and Freudberg, Daniel and Laszka, Aron and Dubey, Abhishek},
  journal={arXiv preprint arXiv:2008.02413},
  year={2020}
}

@article{cipriani2012,
  title={Transit network design: A procedure and an application to a large urban area},
  author={Cipriani, Ernesto and Gori, Stefano and Petrelli, Marco},
  journal={Transportation Research Part C: Emerging Technologies},
  volume={20},
  number={1},
  pages={3--14},
  year={2012},
  publisher={Elsevier}
}

@article{gkiotsalitis2021,
  title={Optimal frequency setting of metro services in the age of COVID-19 distancing measures},
  author={Gkiotsalitis, Konstantinos and Cats, Oded},
  journal={Transportmetrica A: Transport Science},
  pages={1--21},
  year={2021},
  publisher={Taylor \& Francis}
}

@article{wang2021,
  title={Impact of COVID-19 behavioral inertia on reopening strategies for New York City Transit},
  author={Wang, Ding and He, Brian Yueshuai and Gao, Jingqin and Chow, Joseph YJ and Ozbay, Kaan and Iyer, Shri},
  journal={International Journal of Transportation Science and Technology},
  year={2021},
  publisher={Elsevier}
}

@article{jia2021,
title = {A new global method for identifying urban rail transit key station during COVID-19: A case study of Beijing, China},
journal = {Physica A: Statistical Mechanics and its Applications},
volume = {565},
pages = {125578},
year = {2021},
issn = {0378-4371},
doi = {https://doi.org/10.1016/j.physa.2020.125578},
url = {https://www.sciencedirect.com/science/article/pii/S0378437120308761},
author = {Jianlin Jia and Yanyan Chen and Yang Wang and Tongfei Li and Yongxing Li},
keywords = {ail transit network, Corona Virus Disease 2019, Vulnerability, Route diversity, Shortest path, Key station},
abstract = {The rapid-developed COVID-19 has been defined as a global emergency by the World Health Organization. Meanwhile, various evidence indicates there is a positive correlation between the transmission and population density, especially in closed and semi-closed space. The urban rail transit, as one of the major mode choices for people to commute in big cities, carries thousands of passengers every day with relatively closed and limited space, which provides favorable conditions for the spread of the virus. If the surrounding area of any station was disrupted under COVID-19, not only the individual line but also the entire urban rail transit network will have the risk to be affected. Therefore, it is necessary to identify and explore the distribution law of key stations during the spreading process of the COVID-19 virus in the urban rail transit network during the COVID-19 pandemic. Based on the spatial distribution of epidemic area and the demand of urban rail transit passengers, we have proposed a construction method of the rail transit network and use the improved shortest path algorithm to determine the route diversity index of each station which indicates its importance in the urban rail transit network. On this basis, we identify the key stations of the Beijing rail transit network to ensure that passengers avoid high-risk stations during the epidemic. The results show that the number of reasonable routes between any two stations is 1 to 5 during the COVID-19 pandemic. Moreover, the routes diversity index of the Beijing rail transit network was 1.235 during the COVID-19 pandemic and 2.2574 in the normal period. According to the reasonable route diversity index, we have identified the key stations of the Beijing rail transit network during the COVID-19, such as Qi-Li-Zhuang station.}
}

@article{arbex2015,
  title={Efficient transit network design and frequencies setting multi-objective optimization by alternating objective genetic algorithm},
  author={Arbex, Renato Oliveira and da Cunha, Claudio Barbieri},
  journal={Transportation Research Part B: Methodological},
  volume={81},
  pages={355--376},
  year={2015},
  publisher={Elsevier}
}

@article{leblanc1988,
  title={Transit system network design},
  author={LeBlanc, Larry J},
  journal={Transportation Research Part B: Methodological},
  volume={22},
  number={5},
  pages={383--390},
  year={1988},
  publisher={Elsevier}
}

@article{lee2005,
  title={Transit network design with variable demand},
  author={Lee, Young-Jae and Vuchic, Vukan R},
  journal={Journal of Transportation Engineering},
  volume={131},
  number={1},
  pages={1--10},
  year={2005},
  publisher={American Society of Civil Engineers}
}

\end{document}